\newcommand{\R}{\mathbb R}
\newcommand{\D}{\text{d}}
\newcommand{\dd}{\partial}
\newcommand{\db}{\bar\partial}
 \newtheorem{prop}{Proposition}
\newtheorem{lem}[prop]{Lemma}
\newtheorem{theor}[prop]{Theorem}
\begin{document}
\title{Randers pp-waves}

\author{Sjors Heefer}
\email{s.j.heefer@tue.nl}
\affiliation{Department of Mathematics and Computer Science, Eindhoven University of Technology, Eindhoven, The Netherlands}
\author{Christian Pfeifer}
\email{christian.pfeifer@ut.ee}
\affiliation{Laboratory of Theoretical Physics, Institute of Physics, University of Tartu, Tartu, Estonia}
\affiliation{ZARM, University of Bremen, 28359 Bremen, Germany.}
\author{Andrea Fuster}
\email{a.fuster@tue.nl}
\affiliation{Department  of Mathematics and Computer Science, Eindhoven University of Technology, Eindhoven, The Netherlands}

\begin{abstract}
In this work we study Randers spacetimes of Berwald type and analyze Pfeifer and Wohlfarth's vacuum field equation of Finsler gravity for this class. We show that in this case the field equation is equivalent to the vanishing of the Finsler Ricci tensor, analogously to Einstein gravity. This implies that the considered vacuum field equation and Rutz's equation coincide in this scenario. We also construct all exact solutions of Berwald-Randers type to vacuum Finsler gravity, which turn out to be composed of a Ricci-flat, CCNV (covariantly constant null vector) Lorentzian spacetime, or Brinkmann space, and a 1-form defined by its covariantly constant null vector. Since the pp-waves are the most well-known metric in this class we refer to the found solutions as \textit{Randers pp-waves}.
\end{abstract}

\maketitle


\section{Introduction}
Finsler geometry is an extension of Riemannian geometry, in which the squared line element is not restricted to be quadratic in the displacements, and it is a natural framework allowing for a canonical definition of length for curves on a smooth manifold.\\

Historically, the possibility of considering this type of geometry was already discussed by Riemann himself in his famous habilitation lecture in 1854 \cite{Riemann1,Riemann2}. The first systematic study of such spaces, however, appeared only much later, in the dissertation thesis of Finsler in 1918 \cite{Finsler}. It is therefore hardly  surprising that general relativity was formulated on the basis of (pseudo-)Riemannian geometry, which was already well developed at the time. In fact, there is no fundamental physical reason to exclude the use of proper, non-Riemannian, Finslerian spacetimes to describe gravity  \cite{Tavakol_1986, Pfeifer_2019}. \\

The exploration of theories of gravity based on Finsler geometry only began much later, once it became clear that general relativity may not be the complete answer to our understanding of gravity. Today we know that general relativity is able to explain many observations with astonishing precision, but not at all scales. The theory faces substantial problems at the very large and very small scales. In the first case dark matter and dark energy have to be postulated, and the challenge in the latter scenario is to reconcile the theory with the principles of quantum mechanics. A more general theory of gravity, based on Finsler geometry, might shed light on some of these issues.\\

There are two particularly compelling examples of how Finsler geometry may lead to new insights on the nature of the gravitational interaction. The first one appears in the context of quantum gravity. Research suggests that the flat, classical limit of the quantum-mechanical description of gravity may be given by an effective semi-classical theory in which spacetime is a curved Finsler manifold. The geometrical structure of spacetime in this regime seems to be of a more general type and it is truly necessary to use Finsler geometry (or even Lagrange geometry) in order to describe it. This is intimately related to deformed Poincar\'e-symmetries and corresponding modified (energy-momentum) dispersion relations (MDRs). For instance, a certain Finsler geometry can be coupled to any MDR (satisfying some basic mathematical assumptions) in order to describe the corresponding dynamics \cite{Girelli:2006fw,Amelino-Camelia:2014rga,Letizia:2016lew}. This stands in contrast to Lorentzian geometry, which can only describe quadratic MDRs.

The second example comes from the coupling between gravity and fluids, respectively gases. The kinetic theory of gases describes the dynamics of multi-particle systems in terms of a scalar field on the tangent bundle, the so-called 1-particle distribution function (1PDF) \cite{Ehlers2011}. The coupling between gravity and such a gas usually involves averaging over the velocity distribution of its particles and leads to the Einstein-Vlasov equations \cite{Andreasson:2011ng}. Hence, although the velocity distribution of the gas is taken into account for its dynamics, it is averaged away when determining the gas  gravitational field. A direct coupling of the gas 1PDF to Finsler geometry circumvents this loss of information and gives rise to a gas gravitational field distribution described by Finsler geometry \cite{Hohmann:2019sni,Hohmann:2020yia}
\\

In this work we consider the (action-based) approach to Finsler gravity outlined in \cite{Pfeifer:2011xi,Hohmann_2019}. Structurally the theory is analogous to general relativity, in the sense that admissible, or "physical", spacetimes are those that satisfy a certain field equation. Solutions of Berwald type to Pfeifer and Wohlfarth's vacuum field equation have recently been found \cite{Fuster:2015tua,Fuster:2018djw,Caponio_2020}.  We analyze this equation for Randers spacetimes of Berwald type and prove that in this scenario it is formally identical to Einstein's vacuum field equations. As a result, the classification of its solutions reduces to the classification of vacuum solutions in general relativity admitting a covariantly constant $1$-form. The resulting Berwald-Randers spacetimes are composed of a Ricci-flat, CCNV (covariantly constant null vector) Lorentzian spacetime, or Brinkmann space \cite{Brinkmann1925}, and a 1-form defined by its covariantly constant null vector. Since the pp-waves are the most well-known metric in this class we refer to the found solutions as \textit{Randers pp-waves}. 

\section{Finsler geometry}\label{sec:FSFST}
Finsler geometry is a natural extension of Riemannian geometry \cite{Finsler,Bao,Szilasi}. Given the philosophy that the length of a curve is obtained by integrating the length of its tangent vector, Finsler geometry provides the most general way of assigning lengths to curves on a manifold. While in Riemannian geometry the length of a tangent vector is given by the metric-induced norm, in Finsler geometry this length is given by a so-called Minkowski norm; a much weaker and more general notion. \\

First of all some remarks about notation are in order. Throughout this work we will usually work in local coordinates, i.e., given a smooth manifold $M$ we assume that some chart $\phi:U\subset M\to \R^n$ is provided, and we identify any $p\in U$ with its image $\phi(p)\in\R^n$. For $p\in U$ each $Y\in T_pM$ (the tangent spaces to $M$) can be written as $Y = y^i\partial_i\big|_p$, where the tangent vectors $\partial_i \equiv \frac{\partial}{\partial x_i}$ furnish the chart-induced basis of $T_pM$. This provides natural local coordinates on the tangent bundle $TM$ via the chart
\begin{align}
\tilde\phi: \tilde U \to \R^n\times\R^n,\qquad \tilde U = \bigcup_{p\in U} \left\{p\right\}\times T_p M\subset TM,\qquad \tilde\phi(p,Y) = (\phi(p),y^1,\dots,y^n)\eqqcolon (x,y).
\end{align}
These local coordinates on $TM$ in turn provide a natural basis of its tangent spaces $T_{(x,y)}TM$, namely
\begin{align}
\bigg\{\frac{\partial}{\partial x^i} = \partial_i, \frac{\partial}{\partial y^i} = \bar{\partial}_i\bigg\}.
\end{align}

Next we will first introduce the basic notions of Finsler geometry for the positive definite case. The generalization to Lorentzian signature is non-trivial, however, and will be introduced afterwards.

\subsection{Finsler spaces of positive definite signature}\label{sec:FinslerSpaces}
A Finsler space is a pair $(M,F)$, where $M$ is a smooth manifold and $F$, the so-called Finsler function, is a map $F:TM\to[0,\infty)$ that satisfies the following axioms:
\begin{itemize}
	\item $F$ is (positively) homogeneous of degree one with respect to $y$:
	\begin{align}
	F(x,\lambda y) =\lambda F(x, y)\,,\quad \forall \lambda>0\,;
	\end{align}
	\item $F$ is strictly convex in $y$, i.e., the \textit{fundamental tensor}, with components $g_{ij} = \db_i\db_j \left(\frac{1}{2}F^2\right)$, is positive definite.
\end{itemize}
For each $x\in M$ the map $y\mapsto F(x,y)$ is what is known as a Minkowski norm\footnote{Not to be confused with the flat Lorentzian Minkowski metric.} on $T_xM$, i.e., a real-valued function that is positively homogeneous, strictly convex and smooth away from the zero vector. The homogeneity conditions ensure that the length of any curve $\gamma$, defined as
\begin{align}
L(\gamma)=\int  F(\dot{\gamma})\,\D \lambda = \int  F(x,\dot{x})\,\D \lambda,\qquad \dot{\gamma}=\frac{d\gamma}{d\lambda},
\end{align}
is invariant of the parameterization. A fundamental result that is essential for doing computations in Finsler geometry is Euler's theorem for homogeneous functions. It says that if $f:\R^n\to\R$ is (positively) homogeneous of degree $r$, i.e., $f(\lambda y) =\lambda^r f(y)$ for all $\lambda>0$, then $y^i\frac{\dd f}{\dd y_i}(y) = r f(y)$. In particular, this implies the identity
\begin{align}
g_{ij}(x,y)y^i y^j = F(x,y)^2.
\end{align}
Hence the length of curves is formally identical to the length in Riemannian geometry, the difference being that now the metric tensor may depend on the direction in addition to position.\\

The fundamental theorem of Riemannian geometry says that any Riemannian manifold admits a unique torsion free affine connection that is compatible with the metric, the Levi-Civita connection.  A similar statement is true in Finsler geometry, and this is sometimes called the fundamental lemma of Finsler geometry: it states that any Finsler space can be endowed with a canonical connection. An essential difference with Riemannian geometry is that the connection on a Finsler space is in general not a linear one. Let us therefore briefly recall the notion of a non-linear connection. A non-linear  (or Ehresmann) connection is a smooth decomposition of $TTM$ into a horizontal and a vertical subbundle,
\begin{align}
TTM = HTM \oplus VTM.
\end{align}
This provides the most general means of describing  parallel transport of vectors between tangent spaces, and, in particular, it allows one to define whether a curve $\gamma:I=(a,b)\to M$ is autoparallel (`straight'). Intuitively, we would like to call a curve straight whenever the velocity $\dot\gamma:I\to TM$ is `constant'. However, there is no unique way to say, \textit{a priori}, what `constant' means in this context, as each image point of $\dot\gamma$ lies in a different tangent space. As a matter of fact, as $\dot\gamma$, living in the tangent bundle, also contains all information about the base point $\gamma$, it could never be truly constant. Indeed, all we can ask is that  $\dot\gamma$ change only `parallel to $M$', and not in the direction of the fibres of $TM$. The rate of change of $\dot\gamma$, i.e.  $\ddot\gamma$, is an element of $TTM$. Therefore, in order to be able to say what we mean by a straight line we should split the directions in $TTM$ into a space $HTM$ of directions parallel to $M$ and a space of directions $VTM$ along the fibers of $TM$. We then say that a curve $\gamma:I\to M$ is autoparallel if $\ddot\gamma(\lambda)\in H_{\dot\gamma(\lambda)}TM$ for all $\lambda\in I$. The vertical subbundle $VTM$ is canonically defined on any smooth manifold, namely
\begin{align}
VTM = \text{span}\left\{\bar\partial_i\right\}.
\end{align}
However, there is in general not a preferred choice of the horizontal subbundle. In order to be able to speak about straight curves, in the most general sense, one thus needs to select one. In order to do so, a set of functions $N^i_j(x,y)$, the connection coefficients, may be specified, leading to the following horizontal subbundle of $TTM$.
\begin{align}
HTM = \text{span}\left\{\delta_i\equiv \partial_i - N^j_i\db_j\right\}.
\end{align}
Parallel transport of a vector field $V$ along $\gamma$ is then characterized by the parallel transport equation
\begin{align}
\label{eq:nonlinear.parallel.transport.eq}
\dot V^i + N^i_j(\gamma,V)\dot \gamma^j = 0\,,
\end{align}
and consequently, autoparallels are precisely the curves that satisfy
\begin{align}
\label{eq:nonlinear.geodesic.eq}
\ddot \gamma^i + N^i_j(\gamma,\dot \gamma)\dot \gamma^j = 0\,.
\end{align}
As mentioned, on generic smooth manifold there is no canonical choice of the connection\footnote{From now on we will refer to the connection coefficients $N^i_j$ simply as the connection.} $N^i_j$, but any Finsler metric induces one, the \textit{Cartan non-linear connection}. This is the unique homogeneous (non-linear) connection on $TM$ that is (smooth on $TM\setminus\{0\}$,) torsion-free and compatible with $F$. This \textit{Cartan non-linear connection} is given in terms of the Finsler function $F$ by
\begin{align}
N^i_j(x,y) = \frac{1}{4}\bar{\partial}_j \bigg(g^{ik}\big(y^l\partial_l\bar{\partial}_k F^2 - \partial_k F^2\big)\bigg)\,
\end{align}
and may be viewed as a generalization of the Levi-Civita connection to Finsler spaces. The autoparallel curves of the non-linear connection coincide with the geodesics (locally length-minimizing curves) on $M$. The curvature tensor, curvature scalar and the Finsler Ricci tensor 
of $(M,F)$ are defined as
\begin{align}\label{eq:definition_curvatures}
R^i{}_{jk}(x,y) = -[\delta_j,\delta_k]^i =  \delta_j N^i_k(x,y)-\delta_k N^i_j(x,y),\qquad \text{Ric}(x,y) = R^i{}_{ij}(x,y)y^j,\qquad R_{ij}(x,y) = \frac{1}{2}\db_i \db_j\text{Ric}.
\end{align}

\subsection{Berwald spaces and the Riemannian limit}\label{sec:Berwald}

A Berwald space is a Finsler space $(M,F)$ for which the Cartan non-linear connection is in fact a linear connection on $TM$.\footnote{See \cite{Szilasi2011} for an overview of the various equivalent characterizations of Berwald spaces and \cite{Pfeifer:2019tyy} for an explicit procedure to construct Berwald spaces and spacetimes. The latter was used in \cite{Hohmann:2020mgs} to find all homogeneous and isotropic Berwald Finsler geometries.} What this means is that the connection coefficients are of the form
\begin{align}
N^i_j(x,y) = \Gamma^i_{jk}(x)y^k
\end{align}
for a set of functions $\Gamma^i_{jk}:M\to\R$. From the transformation behavior of $N^i_j$ it can be inferred that the functions $\Gamma^i_{jk}$ have the correct transformation behavior to be the Christoffel symbols of a (torsion-free) affine connection on $M$. We will refer to this affine connection as the associated affine connection, or simply \textit{the} affine connection on the Berwald space. 
The parallel transport \eqref{eq:nonlinear.parallel.transport.eq} and autoparallel equations \eqref{eq:nonlinear.geodesic.eq} reduce in this case to the familiar equations
\begin{align}
\dot V^i + \Gamma^i_{jk}(\gamma)\dot \gamma^j V^k = 0, \qquad \ddot \gamma^i + \Gamma^i_{jk}(\gamma)\dot \gamma^j \dot \gamma^k = 0
\end{align}
in terms of the Christoffel symbols. A straightforward calculation reveals that the curvature tensors of a Berwald space can be written as follows
\begin{align}
\label{eq:symm_ricci}
R^j{}_{kl} = \bar R_i{}^j{}_{kl}(x)y^i, \qquad \text{Ric} = \bar R_{ij}(x)y^i y^j, \qquad R_{ij} = \frac{1}{2}\left(\bar R_{ij}(x) + \bar R_{ji}(x)\right),
\end{align}
in terms of the curvature tensor of the associated affine connection
\begin{align}\label{eq:Curv}
	\bar R_l{}^i{}_{jk}= 2\partial_{[j} \Gamma^i_{k]l} + 2\Gamma^i_{m[j}\Gamma^m_{k]l}
\end{align} and its Ricci tensor $\bar R_{lk} = \bar R_l{}^i{}_{ik}$\footnote{We use the notations $T_{[ij]} = \frac{1}{2}\left(T_{ij}-T_{ji}\right)$ and $T_{(ij)} = \frac{1}{2}\left(T_{ij}+T_{ji}\right)$ for (anti-)symmetrization.}, defined in the usual way. In fact, for \textit{positive definite} Berwald spaces, one even has $R_{ij} = \frac{1}{2}\left(\bar R_{ij} + \bar R_{ji}\right) = \bar R_{ij} $, but this does \textit{not} extend to Finsler spacetimes, as will be  discussed in some more detail in the next section. \\

Finsler geometry reduces to Riemannian geometry when the fundamental tensor $g_{ij}(x,y)=g_{ij}(x)$ is independent of direction $y$, i.e., if the fundamental tensor is a Riemannian metric. Equivalently, the space is Riemannian if $F^2$ is quadratic in the $y$-coordinates. In this case the non-linear connection is actually linear, so that, in particular, any Riemannian manifold is Berwald. In fact, the associated linear connection is in this case nothing more than the Levi-Civita connection of the Riemannian metric.

\subsection{Finsler spacetimes}


The generalization of positive definite Finsler geometry to indefinite, for instance Lorentzian, signature is not completely trivial. To see the basic issue, note that if the fundamental tensor $g_{\mu\nu}$ has Lorentzian signature then there will be (non-zero) null vectors $v\in T_x M$ for which $g_{\mu\nu}v^\mu v^\nu=0$. Then $F(x,v)=\sqrt{g_{\mu\nu}v^\mu v^\nu}=\sqrt{0}$, even though $v\neq 0$, so $F$ can never be smooth everywhere on $TM\setminus 0$, which was one of the axioms of a Finsler space. Moreover, for spacelike (or timelike, depending on the convention) directions $w$, $F(x,w)$ will even be imaginary. Thus some things clearly need to be modified in order to give an acceptable definition of a Finsler spacetime. Multiple approaches are possible. One classical approach \cite{Beem} is to work with $L = F^2$ instead of $F$. Another is to restrict the domain of definition of $F$, for instance to those $(x,y)$ for which $F(x,y)^2 = g_{\mu\nu}(x,y)y^\mu y^\nu>0$ \cite{Asanov}. Also combinations of the two approaches and even additional variations have been proposed \cite{Pfeifer:2011tk, Pfeifer:2011xi, Javaloyes2014-1, Javaloyes2014-2}. What we will do in this work is simply replace the subbundle $TM\setminus\{0\}\subset TM$ by a generic conic\footnote{The property of being conic means that if $(x,y)\in\mathcal A$ then also $(x,\lambda y)\in\mathcal A$, for any $\lambda>0$.} subbundle $\mathcal A\subset TM$.  One might say this is the \textit{weakest} definition of a Finsler spacetime, in the sense that it allows for the most examples. This is of course the most suitable for our present purpose, which is to find the most general solutions to the \mbox{vacuum} field equations of Finsler gravity in the Berwald-Randers class. In this way we guarantee that our results do not depend on a particular definition, and that we indeed find the most general solutions. Given any other specific definition of a Finsler spacetime, the most general solutions will then still be a subset of the ones we present later.\\

Thus we will be using the following definition. A Finsler spacetime is a triple $(M,\mathcal A,F)$, where $M$ is a smooth manifold, $\mathcal A$ is a conic subbundle of $TM$ (with `non-empty' fibers) and $F$, the so-called Finsler function, is a map $F:\mathcal A\subset TM\to [0,\infty)$ that satisfies the following axioms:
\begin{itemize}
	\item $F$ is (positively) homogeneous of degree one with respect to $y$:
	\begin{align}
	F(x,\lambda y) =\lambda F(x, y)\,,\quad \forall \lambda>0\,;
	\end{align}
	\item The \textit{fundamental tensor}, with components $g_{\mu\nu} = \db_\mu\db_\nu \left(\frac{1}{2}F^2\right)$, has Lorentzian signature on $\mathcal{A}$.
\end{itemize}

The discussion and results (for the connection, curvature tensors, etc.) treated in sections \ref{sec:FinslerSpaces} and \ref{sec:Berwald} apply verbatim for Finsler spacetimes, asumming that we only consider points $(x,y)\in\mathcal A$. In particular, since the last equality in Eq.\,\eqref{eq:symm_ricci} will prove to be of particular interest to us, we restate it as a lemma.
\begin{lem}
\label{lem:RicciTensors}
The Finsler Ricci tensor of a Berwald space(time) coincides with the symmetrization of the conventional Ricci tensor of the associated affine connection.
\end{lem}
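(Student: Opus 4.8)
The plan is to prove the statement by direct computation, substituting the linear form $N^i_j(x,y)=\Gamma^i_{jk}(x)\,y^k$ of the non-linear connection of a Berwald space into the definitions \eqref{eq:definition_curvatures} of the Finsler curvature quantities. Concretely, I would establish in turn the three equalities displayed in \eqref{eq:symm_ricci}, the last of which is precisely the claim of the lemma.

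First I would compute the horizontal curvature $R^i{}_{jk}=\delta_j N^i_k-\delta_k N^i_j$. Using $\bar\partial_m N^i_k=\Gamma^i_{km}(x)$, the horizontal derivative $\delta_j=\partial_j-N^l_j\bar\partial_l$ acts on $N^i_k$ as
\begin{align}
\delta_j N^i_k = (\partial_j\Gamma^i_{kl})\,y^l - \Gamma^i_{km}\,\Gamma^m_{jl}\,y^l .
\end{align}
Antisymmetrizing in $j$ and $k$ and comparing with the affine curvature \eqref{eq:Curv} gives the first equality, $R^i{}_{jk}(x,y)=\bar R_l{}^i{}_{jk}(x)\,y^l$. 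Contracting with $y^j$ and using the definition $\bar R_{lk}=\bar R_l{}^i{}_{ik}$ of the affine Ricci tensor then yields the second equality, $\text{Ric}(x,y)=\bar R_{ij}(x)\,y^i y^j$.

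Finally, the Finsler Ricci tensor is $R_{ij}=\frac{1}{2}\bar\partial_i\bar\partial_j\,\text{Ric}$. Since $\text{Ric}$ is now a homogeneous quadratic polynomial in $y$ with $y$-independent coefficients $\bar R_{ij}(x)$, one gets $\bar\partial_j(\bar R_{kl}y^ky^l)=(\bar R_{jl}+\bar R_{lj})y^l$ and hence $\bar\partial_i\bar\partial_j(\bar R_{kl}y^ky^l)=\bar R_{ij}+\bar R_{ji}$, so that $R_{ij}=\frac{1}{2}(\bar R_{ij}+\bar R_{ji})=\bar R_{(ij)}$, which is the assertion.

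The computation is entirely elementary; the only points requiring a little care are (i) retaining the inhomogeneous term $-N^l_j\bar\partial_l$ in $\delta_j$, which is exactly what reproduces the quadratic Christoffel terms in \eqref{eq:Curv}, and (ii) observing that two vertical derivatives acting on a quadratic form in $y$ automatically symmetrize its coefficient matrix. Point (ii) is really the conceptual content of the lemma: the antisymmetric part $\bar R_{[ij]}$ of the affine Ricci tensor — which is generically non-zero, since the associated affine connection need not be metric-compatible — is simply invisible to the Finsler Ricci tensor. None of the above uses the signature of the fundamental tensor, so the argument carries over verbatim to Berwald spacetimes once we restrict to points $(x,y)\in\mathcal A$.
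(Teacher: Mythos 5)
Your proof is correct and follows exactly the route the paper has in mind: Lemma~\ref{lem:RicciTensors} is just a restatement of the last equality in Eq.~\eqref{eq:symm_ricci}, which the paper obtains by the same ``straightforward calculation'' you carry out explicitly (substituting $N^i_j=\Gamma^i_{jk}y^k$ into \eqref{eq:definition_curvatures}, recovering \eqref{eq:Curv}, contracting, and applying two vertical derivatives). Your closing remark that the two $\bar\partial$-derivatives automatically symmetrize $\bar R_{ij}$, and that this is where the content of the lemma lies for spacetimes, is precisely the point the paper emphasizes in the surrounding discussion.
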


We remark that in the case that $\mathcal A = TM\setminus\{0\}$, symmetrization is in fact not necessary, as the Ricci tensor is then automatically symmetric. In the positive definite case this follows immediately from Szabo's theorem, which states that the affine connection of any Berwald space is Riemann-metrizable \cite{Szabo}. In the case of Finsler spacetimes the situation is more complicated.
In general Szabo's theorem does {\textit{not}} hold in this setting, as has been demonstrated in~\cite{Fuster_2020}. Therefore, there is no reason to expect the Ricci tensor $\bar R_{ij}$ to be symmetric. In order to shed a bit more light on this, we may use the first Bianchi identity, $\bar R_{[l}{}^i{}_{jk]}=0$ (which holds as the connection is torsion-free) to deduce that 
\begin{align}
	2 \bar R_{[lk]} = \bar R_{i}{}^i{}_{lk}\,.
\end{align} 
When the connection is the Levi-Civita connection of a pseudo-Riemannian metric, the symmetries of the Riemann tensor dictate that the right hand side vanish identically and hence in that scenario the Ricci tensor is always symmetric. Here, however, the curvature tensor, as defined in \eqref{eq:Curv}, does not have the same symmetries as in Riemannian geometry, because the connection, in general, does not come from a pseudo-Riemannian metric. Indeed in the cited article we present explicit examples where these symmetries do not hold and the Ricci tensor is indeed not symmetric.

One can prove that if $\mathcal A = TM\setminus\{0\}$ then the Ricci tensor must be symmetric after all, as illustrated in all detail in the same article, but in general it is not and hence symmetrization in the last equallity of \eqref{eq:symm_ricci} is really necessary.

\subsection{Randers spacetimes}

A Randers metric \cite{Randers} is a Finsler function of the form $F =  \alpha + \beta$, where the variables $\alpha(x,y) = \sqrt{a_{\mu\nu}(x)y^\mu y^\nu}$ and $\beta(x,y) = b_\mu(x) y^\mu$ are defined in terms of a (pseudo-)Riemannian metric\footnote{By abuse of language $\alpha$ is often referred to as a (pseudo-)Riemannian metric.} $a = a_{\mu\nu}\D x^\mu \D x^\nu$ and a 1-form $b = b_\mu \D x^\mu$. By convention, all indices are raised and lowered with the $a_{\mu\nu}$ and We denote the norm of the 1-form by $|b|= \sqrt{g^{\mu\nu}b_\mu b_\nu}$. Randers metrics belong to the class of $(\alpha,\beta)$-metrics, which are Finsler functions of the form $F = \alpha\, \phi(\beta/\alpha)$ with $\alpha$ and $\beta$ as defined above and $\phi$ a scalar function. For Randers metrics, the function $\phi$ is given by $\phi(s) = 1+s$. Randers metrics have been studied extensively in both the  positive definite as well as the Lorentzian context, and have a wealth of applications in physics. \\

A well-known result states that if $\alpha$ is positive definite then $F=\alpha+\beta$ is a Finsler metric if and only if $|b|<1$. A crucial ingredient that leads to this conclusion is the Cauchy-Schwarz inequality $|\beta|\leq |b| \alpha $, which does not extend in this simple form to indefinite signatures. Therefore in the latter case the the situation becomes somewhat different. It can be shown via the matrix determinant lemma that for $\alpha$-timelike (i.e. $a_{\mu\nu}y^\mu y^\nu>0$) vectors the determinant of the fundamental tensor of a Randers metric is given by
\begin{align}\label{eq:determinant_Randers}
\det g =  \left(\frac{\alpha+\beta}{\alpha}\right)^{n+1}\det a,
\end{align}
where $n=\dim M$. This expression has in general no well-defined limit as $\alpha\to 0$, so from this it is clear that, in the Lorentzian case, the subbundle $\mathcal A$ can only include points $(x,y)$ for which $\alpha\neq 0$. In order to have a connected subbundle, one might propose to work on the subbundle consisting of $\alpha$-timelike vectors, i.e., the timecone of $\alpha$. However, this does not work in general, and one has to restrict $\mathcal A$ just a little bit further.

\begin{prop}\label{prop:Randers_well_defined_on_future_timecone}
Given a Lorentzian metric $\alpha = \sqrt{a_{\mu\nu}y^\mu y^\nu}$ on a manifold $M$ with time-orientation $T$ and a future-pointing 1-form $\beta$ that is either null or timelike with respect to $a$, the Randers metric $F=\alpha+\beta$ defines a Finsler spacetime on 
\begin{align}
\mathcal A = \left\{(x,y)\in TM\,: a_{\mu\nu}(x)y^\mu y^\nu>0,\,\, a_{\mu\nu}(x)T^\mu(x)y^\nu>0\right\},
\end{align}
i.e. the forward timecone of a.
\end{prop}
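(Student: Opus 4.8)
The plan is to verify the two axioms of a Finsler spacetime for the triple $(M,\mathcal A,F)$ with $F=\alpha+\beta$, together with the structural requirements on $\mathcal A$. That $\mathcal A$ is a conic subbundle with non-empty fibers is immediate: the conditions $a_{\mu\nu}y^\mu y^\nu>0$ and $a_{\mu\nu}T^\mu y^\nu>0$ define an open set, are invariant under $y\mapsto\lambda y$ for $\lambda>0$, and the fiber $\mathcal A_x$ is the open future timecone of $a$ at $x$ (convex, hence connected, and non-empty since $T(x)\in\mathcal A_x$). Homogeneity of degree one of $F$ is equally immediate, as $\alpha$ and $\beta$ are each homogeneous of degree one; and $F$ is smooth on $\mathcal A$ because $\alpha=\sqrt{a_{\mu\nu}y^\mu y^\nu}$ is smooth exactly where $a_{\mu\nu}y^\mu y^\nu>0$, which is on $\mathcal A$, while $\beta$ is linear in $y$.

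Next I would check that $F$ takes values in $[0,\infty)$ on $\mathcal A$. For $(x,y)\in\mathcal A$ the vector $y$ is future-pointing $a$-timelike, so, since $\beta$ is a future-pointing causal (i.e.\ null or timelike) one-form, $\beta(y)=b_\mu y^\mu\geq 0$, and therefore $F=\alpha+\beta\geq\alpha>0$ on $\mathcal A$. This is precisely where the restriction to the \emph{future} timecone (rather than to all $a$-timelike vectors) is needed: on the past cone $\beta(y)$ is negative and may dominate $\alpha$, so that $F$ would not be non-negative there.

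The substantive step is to show that the fundamental tensor $g_{\mu\nu}=\db_\mu\db_\nu(\tfrac12 F^2)$ has Lorentzian signature on $\mathcal A$. Formula \eqref{eq:determinant_Randers} already gives non-degeneracy: on $\mathcal A$ both $\alpha>0$ and $\alpha+\beta=F>0$, so $\det g=(F/\alpha)^{n+1}\det a$ has the same nonzero sign as $\det a$; a non-vanishing determinant, however, does not by itself pin down the signature. To upgrade this I would argue by continuity. Fix $(x,y)\in\mathcal A$ and consider the family of symmetric bilinear forms $g^{(t)}_{\mu\nu}=\db_\mu\db_\nu\bigl(\tfrac12(\alpha+t\beta)^2\bigr)\big|_{(x,y)}$ for $t\in[0,1]$, i.e.\ the fundamental tensor of the Randers metric with one-form $t\,b$. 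This family is polynomial, hence continuous, in $t$, and \eqref{eq:determinant_Randers} with $b$ replaced by $t\,b$ gives $\det g^{(t)}=\bigl((\alpha+t\beta)/\alpha\bigr)^{n+1}\det a\neq 0$ for every $t\in[0,1]$, because $\alpha+t\beta\geq\alpha>0$ by the previous paragraph. A continuous path of non-degenerate symmetric bilinear forms has constant signature, and $g^{(0)}_{\mu\nu}=\db_\mu\db_\nu\bigl(\tfrac12 a_{\rho\sigma}y^\rho y^\sigma\bigr)=a_{\mu\nu}(x)$ is Lorentzian; hence so is $g^{(1)}_{\mu\nu}=g_{\mu\nu}$.

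I expect this signature step to be the only genuine obstacle; everything else is bookkeeping. A more hands-on alternative, which also makes the role of the future-cone restriction transparent, is to substitute the standard closed form of the Randers fundamental tensor, $g_{\mu\nu}=\tfrac{F}{\alpha}\bigl(a_{\mu\nu}-\ell_\mu\ell_\nu\bigr)+\bigl(\ell_\mu+b_\mu\bigr)\bigl(\ell_\nu+b_\nu\bigr)$ with $\ell_\mu=a_{\mu\nu}y^\nu/\alpha$: here the first term is $F/\alpha>0$ times the negative-semidefinite rank-$(n-1)$ "spatial" form $a_{\mu\nu}-\ell_\mu\ell_\nu$ (negative definite on the $a$-orthogonal complement of $y$, vanishing on $\R y$), while the second is rank-one positive, so by rank-one eigenvalue interlacing $g$ has at most one positive eigenvalue; since $g_{\mu\nu}y^\mu y^\nu=F^2>0$ it has at least one, and since $\det g\neq 0$ the remaining $n-1$ are strictly negative, i.e.\ $g$ is Lorentzian.
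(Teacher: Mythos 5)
Your main argument is correct and is essentially the paper's own proof: show $F=\alpha+\beta>0$ on the forward timecone using the positivity of the $a$-pairing between a future-pointing timelike vector and a future-pointing causal 1-form, feed this into the determinant formula \eqref{eq:determinant_Randers}, and deform along $F_t=\alpha+t\beta$, $t\in[0,1]$, noting that the nowhere-vanishing determinant forces constant signature, so $g=g^{(1)}$ is Lorentzian because $g^{(0)}=a$ is. The differences are minor: the paper isolates the causal-pairing fact as a separate lemma (proved by passing to a frame in which the time-orientation vector is purely temporal and applying Cauchy--Schwarz to the spatial parts, which yields the strict inequality $b_\mu y^\mu>0$), whereas you invoke it as a standard Lorentzian fact in the weaker form $b_\mu y^\mu\geq 0$ --- which is indeed all your continuity argument needs --- and you additionally spell out the conic-subbundle, homogeneity and smoothness bookkeeping that the paper leaves implicit. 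Your second, alternative argument is a genuinely different and valid route the paper does not take: writing $g_{\mu\nu}=\frac{F}{\alpha}\left(a_{\mu\nu}-\ell_\mu\ell_\nu\right)+\left(\ell_\mu+b_\mu\right)\left(\ell_\nu+b_\nu\right)$ exhibits $g$ as a negative-semidefinite rank-$(n-1)$ form plus a rank-one positive term, so any subspace on which $g$ is positive definite has dimension at most one, while $g_{\mu\nu}y^\mu y^\nu=F^2>0$ and $\det g\neq 0$ then pin the signature to $(+,-,\dots,-)$; this is a pointwise argument that avoids the homotopy altogether, at the cost of using the explicit closed form of the fundamental tensor (which the paper in any case records in the proof of Proposition \ref{prop:Randers_EFEs}).
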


The proof, presented in Appendix \ref{app:proof}, relies on showing that $\beta= b_i y^i >0$ holds everywhere on $\mathcal A$. \\

The following is a well-established result in the positive definite case, which holds in the indefinite case as well; see e.g. \cite{Pfeifer:2019tyy}, where conditions for a Finsler spacetime to be of Berwald type are presented.

\begin{prop}
\label{prop:coinciding_spray}
A Randers space, $F=\alpha+\beta$ is Berwald if and only if the 1-form $\beta$ is covariantly constant w.r.t. the Levi-Civita connection of $a$. If so, the affine connection of $F$ is identical to the Levi-Civita connection of $a$.
\end{prop}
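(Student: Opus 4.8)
The plan is to work with the geodesic spray of the Randers metric. Writing $N^i_j=\db_j G^i$ for the spray coefficients $G^i$ (so that geodesics obey $\ddot\gamma^i+2G^i(\gamma,\dot\gamma)=0$), the space is of Berwald type precisely when $N^i_j$ is linear in $y$, i.e. when $G^i$ is a quadratic polynomial in $y$ with $x$-dependent coefficients; in that case $G^i=\tfrac12\Gamma^i_{jk}(x)y^jy^k$, and $\Gamma^i_{jk}$ is the associated affine connection. So everything reduces to computing $G^i$ for $F=\alpha+\beta$ and deciding when it is polynomial in $y$.

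The starting point is the standard formula for the spray of a Randers metric (a routine $(\alpha,\beta)$-metric computation):
\begin{align}
G^i = \tfrac12\gamma^i_{jk}(x)\,y^jy^k + \alpha\, s^i{}_0 + \frac{r_{00}-2\alpha\, s_0}{2(\alpha+\beta)}\, y^i,
\end{align}
where $\gamma^i_{jk}$ are the Christoffel symbols of $a$, the tensors $r_{ij}=\nabla_{(i}b_{j)}$ and $s_{ij}=\nabla_{[i}b_{j]}$ are the symmetric and antisymmetric parts of the Levi-Civita covariant derivative of $b$, $s^i{}_0=a^{ik}s_{kj}y^j$, $s_0=b_is^i{}_0$ and $r_{00}=r_{ij}y^iy^j$. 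For the ``if'' direction, assume $b$ is covariantly constant, i.e. $r_{ij}=s_{ij}=0$. Then $s^i{}_0=s_0=r_{00}=0$, the two correction terms vanish identically, and $G^i=\tfrac12\gamma^i_{jk}(x)y^jy^k$ is quadratic in $y$; hence $F$ is Berwald and its affine connection is $\gamma^i_{jk}$, the Levi-Civita connection of $a$. (Intuitively: $\nabla b=0$ makes $\beta$ closed, so the $F$- and $\alpha$-length functionals have the same extremals, and $\beta$ is constant along $\alpha$-geodesics, so these extremals even share the affine parametrization.)

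For the ``only if'' direction, assume $F$ is Berwald, so $G^i$ is a quadratic polynomial $P^i(y)$; then the two correction terms must themselves sum to a polynomial. Multiplying by $2(\alpha+\beta)$ and using that $\alpha=\sqrt{a_{\mu\nu}y^\mu y^\nu}$ is not a rational function of $y$ (as $a$ is non-degenerate and $\dim M\ge 2$), I would split the identity into its $\alpha$-free part and its part proportional to $\alpha$; each must vanish. Eliminating $P^i$ between the two leaves the vector identity $2(\alpha^2-\beta^2)\,s^i{}_0=-(r_{00}+2\beta s_0)\,y^i$. Contracting with $a_{i\ell}\,y^\ell$ annihilates the left side by antisymmetry of $s_{ij}$, so $(r_{00}+2\beta s_0)\,\alpha^2\equiv 0$ and hence $r_{00}+2\beta s_0=0$; plugging back gives $(\alpha^2-\beta^2)\,s^i{}_0\equiv 0$. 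Since neither $\alpha^2$ nor $\alpha^2-\beta^2$ vanishes identically on the open cone $\mathcal A$ (where $\alpha^2>0$), we get $s^i{}_0=0$, hence $s_{ij}=0$, hence $s_0=0$, and then $r_{00}=0$, hence $r_{ij}=0$. Thus $\nabla_\mu b_\nu=r_{\mu\nu}+s_{\mu\nu}=0$, and by the ``if'' part the affine connection of $F$ is the Levi-Civita connection of $a$.

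I expect the ``only if'' direction to be the real work: showing that the manifestly non-polynomial correction terms cannot combine into a polynomial unless $r_{ij}$ and $s_{ij}$ both vanish. The three points that make it go through are the irrationality of $\alpha$ in $y$ (to separate the two parts of the identity), the antisymmetry of $s_{ij}$ (which rules out the a priori possibility $s^i{}_0\propto y^i$), and the fact that these are identities on the open conic set $\mathcal A$, which nevertheless suffices for the polynomial arguments since a rational function vanishing on a nonempty open set vanishes. A more conceptual route to ``only if'' is to use that the affine connection of any Berwald space preserves the Finsler function under parallel transport, hence preserves $F^2=\alpha^2+2\alpha\beta+\beta^2$; separating off the $\alpha$-irrational piece and using $\alpha^2-\beta^2\not\equiv 0$ forces that connection to be compatible with both $a$ and $\beta$, which — combined with torsion-freeness — identifies it as the Levi-Civita connection of $a$ and yields $\nabla\beta=0$.
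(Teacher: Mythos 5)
Your proposal is correct, but it is worth noting that the paper itself does not prove this proposition at all: it is quoted as a known result, with a pointer to the literature (e.g.\ the Berwald conditions in the cited work of Pfeifer et al.), the positive-definite version being classical. What you supply is therefore an actual proof, and a sound one: the characterization ``Berwald $\Leftrightarrow$ spray quadratic in $y$'' is the right starting point, the quoted Randers spray formula $G^i=\tfrac12\gamma^i_{jk}y^jy^k+\alpha s^i{}_0+\frac{r_{00}-2\alpha s_0}{2(\alpha+\beta)}y^i$ is the standard one, and your separation into the $\alpha$-free and $\alpha$-proportional parts, the contraction with $y_i$ killing $s^i{}_0$ by antisymmetry, and the observation that $\alpha^2-\beta^2=(a_{\mu\nu}-b_\mu b_\nu)y^\mu y^\nu$ cannot vanish identically (since $a$ is non-degenerate and $b\otimes b$ has rank one) correctly force $s_{ij}=0$ and then $r_{ij}=0$; the ``if'' direction is immediate from the same formula. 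The splitting argument (a polynomial identity $A+\alpha B=0$ forces $A=B=0$ because $\alpha$ is irrational in $y$) is exactly the device the paper itself employs in the proof of its Proposition~3, and your remark that identities on the open cone $\mathcal A$ suffice is the right way to handle the Lorentzian domain restriction. The only caveat you should make explicit is that the spray formula you quote is derived in the positive-definite literature; since its derivation is purely algebraic it remains valid on $\mathcal A$, where $\alpha>0$ and $F=\alpha+\beta>0$, but stating this is needed for the indefinite case the proposition is actually used in. Your alternative concluding route (a Berwald connection preserves $F$ under parallel transport, and separating the irrational part forces compatibility with both $a$ and $\beta$, hence it is the Levi-Civita connection of $a$ and $\nabla\beta=0$) is also viable and arguably closer in spirit to how the cited reference characterizes Berwald $(\alpha,\beta)$-spacetimes.
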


\section{Vacuum Field Equation}\label{sec:FinslerGravity}

In the context of Finsler gravity, two types of field equations have been proposed in the literature. First, tensorial field equations considering the Finsler metric as fundamental variable \cite{Chang:2009pa,Asanov,Kouretsis:2008ha,Stavrinos2014,Voicu:2009wi,Minguzzi:2014fxa}, and second, scalar field equations for which the Finsler Lagrangian is the fundamental variable \cite{Rutz,Pfeifer:2011xi,Chen-Shen}. We adopt the latter view, and in addition demand the field equation to be variational\footnote{This constraints the field equation to be a scalar equation on the tangent bundle.}, meaning that it can be derived from an action. The first Finsler generalization of Einstein's equations in a scalar fashion was suggested by Rutz \cite{Rutz}. However, Rutz's equation is \textit{not} variational; its variational completion \cite{Hohmann_2019} \cite{Akbar-Zadeh1988} turns out to be the Finsler gravity equation proposed in \cite{Pfeifer:2011xi} (for positive definite Finsler spaces the analogous equation was presented in \cite{Chen-Shen}), which we therefore consider as our departure point. In the case of Berwald spacetimes it reduces to \cite{Fuster:2018djw}
\begin{align}\label{eq:BEFEs}
	\left(  g^{\mu\nu} - \frac{3}{F^2} y^\mu y^\nu \right) R_{\mu\nu}  = 0\,,
\end{align}
where $R_{\mu\nu}$ is the Finsler Ricci tensor and since we are in a Berwald setting we have $R_{\mu\nu} = \bar R_{(\mu\nu)}(x)$, see Eq.\,(\ref{eq:symm_ricci}). Note that the vanishing of the Finsler Ricci tensor is a sufficient condition for a Berwald spacetime to be a solution to Eq.\,\eqref{eq:BEFEs}.  In fact, for Randers spaces we have a stronger result. 

\begin{prop}\label{prop:Randers_EFEs}
For Berwald-Randers spacetimes, the field equation, Eq.\,\eqref{eq:BEFEs} is equivalent to the vanishing of the Finsler Ricci tensor $R_{\mu\nu}(x) = \bar R_{\mu\nu}(x) =\bar R_{(\mu\nu)}(x)=0$.
\end{prop}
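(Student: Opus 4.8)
The implication ``$R_{\mu\nu}=0\Rightarrow$~Eq.~\eqref{eq:BEFEs}'' is trivial; for the converse I would proceed as follows. First I record what the field equation says here. By Proposition~\ref{prop:coinciding_spray} the affine connection of a Berwald--Randers space $F=\alpha+\beta$ is the Levi--Civita connection of $a$, so $\bar R_{\mu\nu}$ is the ordinary (hence symmetric) Ricci tensor of $a$; combined with Lemma~\ref{lem:RicciTensors} this gives $R_{\mu\nu}(x)=\bar R_{\mu\nu}(x)=\bar R_{(\mu\nu)}(x)$, a tensor field on $M$ with no $y$-dependence. It therefore suffices to show that if Eq.~\eqref{eq:BEFEs} holds for every $(x,y)\in\mathcal A$, then $R_{\mu\nu}=0$ pointwise; the remaining equalities in the statement are then automatic.

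Next I would make the contraction $g^{\mu\nu}R_{\mu\nu}$ explicit. From $g_{\mu\nu}=\db_\mu\db_\nu\big(\tfrac{1}{2}F^{2}\big)$ one finds $g_{\mu\nu}=\tfrac{F}{\alpha}\big(a_{\mu\nu}-\ell_\mu\ell_\nu\big)+\big(\ell_\mu+b_\mu\big)\big(\ell_\nu+b_\nu\big)$ with $\ell_\mu=y_\mu/\alpha$ and $y_\mu=a_{\mu\nu}y^\nu$, which is a rank-$\le2$ modification of $\tfrac{F}{\alpha}a_{\mu\nu}$; the matrix inversion (Sherman--Morrison--Woodbury) lemma then yields, on the forward timecone $\mathcal A$ (where $\alpha>0$ and $F>0$),
\begin{align}
g^{\mu\nu}=\frac{\alpha}{F}\,a^{\mu\nu}+\frac{\beta+\alpha\,|b|^{2}}{F^{3}}\,y^\mu y^\nu-\frac{\alpha}{F^{2}}\big(y^\mu b^\nu+b^\mu y^\nu\big).
\end{align}
Substituting into Eq.~\eqref{eq:BEFEs}, multiplying through by $F^{3}=(\alpha+\beta)^{3}$, and using that $\alpha^{2}=a_{\mu\nu}y^\mu y^\nu$, $\beta=b_\mu y^\mu$, $R_{\mu\nu}y^\mu y^\nu$ and $R_{\mu\nu}y^\mu b^\nu$ are all polynomials in $y$ (with $x$-dependent coefficients), I would reorganise the identity in the form $P(y)+\alpha(y)\,Q(y)=0$ on $\mathcal A$, with $P,Q$ homogeneous polynomials of degrees $3$ and $2$ assembled from $\bar R:=a^{\mu\nu}R_{\mu\nu}$, $R_{\mu\nu}y^\mu y^\nu$ and $R_{\mu\nu}y^\mu b^\nu$.

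The crucial point is that for $\dim M\ge3$ the quadratic form $\alpha^{2}$ is irreducible and is not the square of a polynomial, so $\alpha$ is not a rational function; hence $P+\alpha Q\equiv0$ forces $P\equiv0$ and $Q\equiv0$ separately (square $P=-\alpha Q$ to obtain $P^{2}=(a_{\mu\nu}y^\mu y^\nu)\,Q^{2}$ as polynomials, then invoke unique factorisation). Working out $P$, its vanishing reads $\alpha^{2}\big(\bar R\,\beta-R_{\mu\nu}y^\mu b^\nu\big)=\beta\,R_{\mu\nu}y^\mu y^\nu$. If $b\neq0$ then $\beta$ is a nonzero linear form not divisible by the irreducible degree-two polynomial $\alpha^{2}$, so $\alpha^{2}$ must divide $R_{\mu\nu}y^\mu y^\nu$; by homogeneity $R_{\mu\nu}y^\mu y^\nu=c\,\alpha^{2}$ for a constant $c$, i.e. $R_{\mu\nu}=c\,a_{\mu\nu}$. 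Feeding this back into $P\equiv0$ collapses it to $c\,(n-2)\,\alpha^{2}\beta=0$, whence $c=0$ and $R_{\mu\nu}=0$. The degenerate case $b=0$ ($F=\alpha$ pseudo-Riemannian, where the statement is just general relativity) is handled instead via $Q\equiv0$, which becomes $\bar R\,a_{\mu\nu}=3R_{\mu\nu}$; its trace gives $(n-3)\bar R=0$, so $\bar R=0$ for $n\neq3$ (in particular in dimension four) and again $R_{\mu\nu}=0$.

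I expect the only laborious steps to be the inversion of the Randers fundamental tensor into the closed form above and the bookkeeping that cleanly separates the ``rational'' part $P$ from the ``$\alpha$-proportional'' part $Q$; once these are in place the conclusion is essentially immediate, the underlying mechanism being that an irreducible quadratic form cannot divide a linear one. Low-dimensional special cases would need a separate remark but are not relevant for spacetimes.
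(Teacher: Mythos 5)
Your proposal is correct, and its skeleton is the same as the paper's: you compute the inverse fundamental tensor, substitute into Eq.\,\eqref{eq:BEFEs}, multiply by $F^3$, and split the resulting identity into the part even in $\alpha$ and the part proportional to $\alpha$, exactly as in Eq.\,\eqref{eq:alphapowers}. (Incidentally, your coefficient $(\beta+\alpha|b|^2)/F^3$ in $g^{\mu\nu}$ is the correct one; the paper's displayed $(\beta+|b|\alpha)$ is a typo, though its subsequent Eq.\,\eqref{eq:alphapowers} is consistent with the correct inverse.) Where you genuinely diverge is in how the two separated conditions are exploited. The paper justifies the separation by the informal ``otherwise $\alpha$ would be rational'' contradiction and then extracts $R=0$ and $R_{\mu\nu}=0$ from the even part \eqref{eq:feqrat} by applying $\bar\partial$ twice and tracing with $a$, twice in succession; you instead make the separation rigorous by squaring and invoking irreducibility of the nondegenerate quadratic form $\alpha^2$ (for $\dim M\ge 3$) in the polynomial ring, and you finish algebraically: $\alpha^2$ must divide $\beta\,R_{\mu\nu}y^\mu y^\nu$, hence $R_{\mu\nu}=c\,a_{\mu\nu}$ (using symmetry of $R_{\mu\nu}$, which you correctly secure via Proposition~\ref{prop:coinciding_spray} and Lemma~\ref{lem:RicciTensors}), and re-substitution forces $c(n-2)=0$. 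Both routes are valid; yours buys a cleaner justification of the even/odd split and avoids the derivative-and-trace bookkeeping, at the price of a dimension caveat ($n\ge3$, irrelevant for spacetimes). Your separate treatment of $b=0$ via the odd part is harmless but unnecessary in this setting, since by Proposition~\ref{prop:Randers_well_defined_on_future_timecone} the $1$-form is causal and future-pointing, so $\beta>0$ on $\mathcal A$ and in particular $b\neq0$; also, your ``constant $c$'' should be read as a function of $x$ alone, which is all the argument needs.
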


\begin{proof}
To see this we first compute the fundamental tensor for a Randers spacetime $F = \alpha+\beta$. It reads
\begin{align}
    g_{\mu\nu} = \frac{F}{\alpha}a_{\mu\nu} + b_\mu b_\nu + (b_\mu y_\nu + b_\nu y_\mu) \frac{1}{\alpha} -  y_\mu y_\nu \frac{\beta}{\alpha^3}.
\end{align}
It can be checked that its inverse is given by
\begin{align}
    g^{\mu\nu} = a^{\mu\nu}\frac{\alpha}{F} - (b^\nu y^\mu + b^\mu y^\nu)\frac{\alpha}{F^2}+ y^\mu y^\nu \frac{(\beta  + |b| \alpha)}{F^3}\,.
\end{align}
Substituting the expression for the $g^{\mu\nu}$ into the Berwald field equation, \eqref{eq:BEFEs}, we find that for Berwald-Randers spacetimes the field equation become equivalent to
\begin{align}
     R \alpha (\alpha + \beta)^2  - 2\alpha (\alpha + \beta) R_{\mu\nu} b^\mu y^\nu  + (\beta + \alpha |b|^2){R}_{\mu\nu}y^\mu y^\nu - 3 ( \alpha + \beta){R}_{\mu\nu} y^\mu y^\nu &= 0,
\end{align}
which is a a polynomial equation in $\alpha$. We can separate this equation into the rational part (even powers in $\alpha$) and irrational part (odd powers in $\alpha$) as
\begin{align}\label{eq:alphapowers}
   R \alpha^3 + (R \beta^2 - 2 \beta R_{\mu\nu}y^\mu b^\nu + (|b|^2-3) R_{\mu\nu} y^\mu y^\nu )\alpha =  2 ( R_{\mu\nu} b^\mu y^\nu - R \beta )\alpha^2 + 2 \beta{R}_{\mu\nu} y^\mu y^\nu \,.
\end{align}
In order for this equation to hold, the left-hand side must become polynomial in $y$. However, since $\alpha=\sqrt{a_{\mu\nu}y^\mu y^\nu}$ is built from the components of a non-degenerate metric $a_{\mu\nu}$ this is only possible if both the rational and irrational part vanish separately.

To verify this last statement before we continue, we rewrite the equation above as
\begin{align}\label{eq:irrat1}
	P \alpha = Q\,,
\end{align} 
where $Q = Q_{\mu\nu\rho}(x)y^\mu y^\nu y^\rho$ and $P = P_{\mu\nu}(x)y^\mu y^\nu$ are third and second order polynomials in $y$.
Thus, assuming $P$ is non-vanishing then $\alpha = \frac{Q}{P}$ everywhere on $\mathcal{A}$. In other words $\alpha$ is a \textit{rational} function everywhere on $\mathcal{A}$. This is however not possible since $\alpha$ is the square root of a pseudo-Riemannian metric, which is in particular non-degenerate and by definition non-vanishing on $\mathcal{A}$. Hence we must have $P=0$, by contradiction.

To continue our proof we now focus on the vanishing of the rational part $Q$, which implies
\begin{align}\label{eq:feqrat}
    2 ( R_{\mu\nu} b^\mu y^\nu - R \beta )\alpha^2 + 2 \beta{R}_{\mu\nu} y^\mu y^\nu = 0\,.
\end{align}
Taking two derivatives with respect to $y$ and taking the trace with $a$ yields
\begin{align}
   4(5 R \beta - 8 R_{\mu\nu}b^\mu y^\nu) = 0.
\end{align}
Substituting this into equation \eqref{eq:feqrat} yields
\begin{align}
    \label{eq:last_eq_proof}
    \beta\left(\frac{3}{4}R \alpha^2  - 2 {R}_{\mu\nu} y^\mu y^\nu\right)=0.
\end{align}
After applying other two $y$-derivatives to the term in parenthesis and contracting again with $a$ we get $2R=0$. Substituting this back into \eqref{eq:last_eq_proof} yields $R_{\mu\nu}y^\mu y^\nu =0$ and hence $R_{\mu\nu}=0$.
\end{proof}
Notice that in this specific scenario, Berwald-Randers, the vacuum field equations of Finsler gravity are formally identical to the vacuum field equations of general relativity. Moreover, they coincide with Rutz's proposal for vacuum field equation in Finsler gravity \cite{Rutz}, $\text{Ric}=0$, which amounts to the vanishing of the Finsler Ricci tensor, $R_{\mu\nu}=0$, see Eq.\,\eqref{eq:definition_curvatures}. The analogous result for positive definite Finsler spaces has been proven in \cite{Chen-Shen}. 

Next we will investigate solutions of the field equation. The previous results show that in Berwald-Randers spacetimes, the metric $a$ is determined by the Einstein vacuum equations $\bar R_{\mu\nu} = 0$ and $1$-form $b=b_\mu dx^\mu$ by the condition that it is covariantly constant w.r.t.\,metric $a$, $\nabla_\mu b_\nu = 0$.

\section{Exact Berwald-Randers Solutions}

We now have all the background required to prove our main results. The following theorem classifies all Berwald-Randers solutions to the vacuum FEFE (Finsler generalization of Einstein's field equation), Eq.\,\eqref{eq:BEFEs}, that are  Berwald in terms of solutions to the classical EFEs.

\begin{theor}
\label{theor:main}
Let $(M,\mathcal A, F)$ be a Randers spacetime of Berwald type, $F = \alpha + \beta$. Then $F$ is an exact solution to the vacuum FEFE if and only if $\alpha$ is a solution to the vacuum EFEs.
\end{theor}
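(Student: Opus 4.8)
The plan is to deduce the theorem by chaining together the two structural results already at our disposal: Proposition~\ref{prop:coinciding_spray}, which identifies the affine connection underlying a Berwald--Randers space with the Levi-Civita connection of the base metric $a$, and Proposition~\ref{prop:Randers_EFEs}, which reduces the vacuum FEFE for such spacetimes to the vanishing of the Finsler Ricci tensor. Once these are invoked, only a short identification of curvature tensors remains, with essentially no further computation.

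Concretely, since $F=\alpha+\beta$ is of Berwald type, Proposition~\ref{prop:coinciding_spray} tells us that its associated affine connection is precisely the Levi-Civita connection of the pseudo-Riemannian metric $a$. Hence the curvature tensor $\bar R_l{}^i{}_{jk}$ of Eq.~\eqref{eq:Curv} is the Riemann tensor of $a$, and the tensor $\bar R_{\mu\nu}=\bar R_\mu{}^i{}_{i\nu}$ appearing in Eq.~\eqref{eq:symm_ricci} is the ordinary Ricci tensor of $a$; in particular it is symmetric, so by Lemma~\ref{lem:RicciTensors} the Finsler Ricci tensor satisfies $R_{\mu\nu}=\bar R_{(\mu\nu)}=\bar R_{\mu\nu}=\text{Ric}(a)_{\mu\nu}$. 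On the other hand, Proposition~\ref{prop:Randers_EFEs} states that, for a Berwald--Randers spacetime, Eq.~\eqref{eq:BEFEs} holds if and only if $R_{\mu\nu}=0$. Putting the two facts together, $F$ solves the vacuum FEFE if and only if $\text{Ric}(a)=0$, i.e.\ if and only if $a$ (denoted $\alpha$ in the statement, following the usual abuse of notation) solves the vacuum EFEs. Since each link in this chain is an equivalence, both implications of the theorem are obtained simultaneously.

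The genuinely substantive input is Proposition~\ref{prop:Randers_EFEs}, whose proof requires separating the even and odd powers of $\alpha$ in the field equation and then repeatedly differentiating and tracing; granting that proposition, the only step here that deserves a second glance is the replacement of $\bar R_{(\mu\nu)}$ by $\bar R_{\mu\nu}$, which is legitimate precisely because the Berwald--Randers affine connection is metric and therefore has a symmetric Ricci tensor --- a property that, as stressed after Lemma~\ref{lem:RicciTensors}, may fail for a general Berwald spacetime. It is also worth recalling that the theorem presupposes $(M,\mathcal A,F)$ is an honest Finsler spacetime; by Proposition~\ref{prop:Randers_well_defined_on_future_timecone} this is automatic on the forward timecone of $a$ once $\beta$ is future-pointing and causal, so restricting attention to that domain costs no generality.
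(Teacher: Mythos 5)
Your proof is correct and follows essentially the same route as the paper: Proposition~\ref{prop:coinciding_spray} to identify the affine connection with the Levi-Civita connection of $\alpha$, Lemma~\ref{lem:RicciTensors} to identify the Finsler Ricci tensor with the Ricci tensor of $\alpha$, and Proposition~\ref{prop:Randers_EFEs} to reduce the FEFE to the vanishing of that tensor. Your extra remark that the symmetrization is superfluous here because the Levi-Civita Ricci tensor is symmetric is a welcome clarification, but it does not change the argument.
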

\begin{proof}
Since $F$ is Berwald-Randers, Proposition \ref{prop:coinciding_spray}  implies that its affine connection coincides with the Levi-Civita connection of the Lorentzian metric $\alpha$. With Lemma \ref{lem:RicciTensors} it thus follows that the Finsler Ricci tensor coincides with the Ricci tensor of $\alpha$, and since for Lorentzian resp.\,Randers spaces the property of being a solution is equivalent to the vanishing of the relevant Ricci tensor, by Proposition \ref{prop:Randers_EFEs}, $\alpha$ solves the vacuum EFEs if and only $F$ solves the vacuum FEFE.
\end{proof}

Taking into account Prop. \ref{prop:coinciding_spray}, we may also formulate the theorem in the following way.

\begin{theor}
\label{theor:main2}
Let $(M,\mathcal A, F)$ be a Finsler spacetime of Randers type, $F = \alpha + \beta$. Then the following are equivalent:
\begin{enumerate}
\item $F$ is an exact Berwald solution to the vacuum FEFE
\item $\alpha$ is an exact solution to the vacuum EFEs and $F$ is Berwald
\item $\alpha$ is an exact solution to the vacuum EFEs and $\beta$ is parallel w.r.t. $\alpha$.
\end{enumerate}
\end{theor}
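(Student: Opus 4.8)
The plan is to prove Theorem \ref{theor:main2} by establishing the two equivalences (1) $\Leftrightarrow$ (2) and (2) $\Leftrightarrow$ (3) separately, drawing entirely on results already in hand, namely Theorem \ref{theor:main} and Proposition \ref{prop:coinciding_spray}. No new computation is needed; the theorem is a repackaging of those two results, and the work is purely logical bookkeeping.

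For (1) $\Leftrightarrow$ (2), I would observe that statement (1) bundles two properties of $F$ — being of Berwald type and solving the vacuum FEFE — while statement (2) bundles $\alpha$ solving the vacuum EFEs with $F$ being of Berwald type. The clause ``$F$ is Berwald'' is common to both, so it suffices to show that, \emph{under the standing hypothesis that $F$ is Berwald}, ``$F$ solves the vacuum FEFE'' is equivalent to ``$\alpha$ solves the vacuum EFEs''; but this is exactly the statement of Theorem \ref{theor:main}. Appending the retained Berwald clause to both sides of that biconditional yields (1) $\Leftrightarrow$ (2). For (2) $\Leftrightarrow$ (3), now ``$\alpha$ solves the vacuum EFEs'' is the common clause, so it remains to replace ``$F$ is Berwald'' by ``$\beta$ is parallel with respect to the Levi-Civita connection of $\alpha$'', and this substitution is precisely Proposition \ref{prop:coinciding_spray}. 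Chaining the two equivalences completes the argument.

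I do not expect a genuine obstacle here; the only point requiring care is making sure the ``if and only if'' of Theorem \ref{theor:main}, which is itself conditional on $F$ being Berwald, is correctly paired with the Berwald clause explicitly kept in (1) and (2), so that no implication is claimed in a case where $F$ is not known to be Berwald. If a more transparent presentation is preferred, one can instead give a single cyclic chain (1) $\Rightarrow$ (2) $\Rightarrow$ (3) $\Rightarrow$ (1): the first arrow is Theorem \ref{theor:main} (plus carrying along ``$F$ Berwald''), the second is Proposition \ref{prop:coinciding_spray}, and the third is Proposition \ref{prop:coinciding_spray} followed by Theorem \ref{theor:main}. This is slightly more verbose but makes the dependence on each cited result fully explicit.
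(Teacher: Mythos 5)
Your proposal is correct and follows exactly the route the paper intends: the paper presents Theorem \ref{theor:main2} as an immediate reformulation of Theorem \ref{theor:main} ``taking into account Prop.\ \ref{prop:coinciding_spray}'', which is precisely your decomposition of (1) $\Leftrightarrow$ (2) via Theorem \ref{theor:main} and (2) $\Leftrightarrow$ (3) via Proposition \ref{prop:coinciding_spray}. Your explicit bookkeeping of the Berwald clause is a fine (and slightly more careful) rendering of the same argument.
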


An immediate consequence of the theorem is that a Randers metric of Berwald type can only be a solution to the Finsler gravity field equation provided $\alpha$ admits a parallel 1-form. This strongly restricts the set of possibilities for $\alpha$, which we investigate in detail in the next section.
 
\section{Randers pp-waves}
\label{sec:Randers_pp_waves}
It has been shown \cite{Batista_2014} that if a 4-dimensional Ricci-flat Lorentzian manifold $(M,g)$ admits a covariantly constant 1-form $\omega$, then either $g$ is the flat Minkowski metric (and hence the components of $\omega$, in suitable coordinates, must be constant), or $\omega$ is null and the metric can be written, in local coordinates $(u,v,x^1,x^2)$, with $u=(1/\sqrt{2})(x^0-x^3)$ and $v=(1/\sqrt{2})(x^0+x^3)$ light-cone coordinates, as
\begin{align} \label{eq:CCNV}
ds^2=-2\,\D u\left(\D v + H(u,x)\, \D u + \, W_a(u,x)\D x^a\right)  +h_{ab}(u,x) \D x^a \D x^b,\qquad \omega = \D u,\qquad a,b=1,2,
\end{align}
where $H$, $W_a$ are real metric functions and $h_{ab}$ is a $2$-dimensional Riemannian metric. We focus our attention on the non-trivial case given by Eq.\,\eqref{eq:CCNV}. Lorentzian spacetimes of this form are called Brinkmann spaces\footnote{Note that these are not necessarily Ricci flat.}, and are also referred to as CCNV (covariantly constant null vector) spacetimes; the standard pp-waves are obtained when the transverse metric is Euclidean. Brinkmann spaces in higher dimensions are also of this form, see for example \cite{Leistner2016,blanco2011,Mcnutt2009}.\\

We choose $\alpha = \D s^2(y,y)$ as above and $\beta = \D u(y) = y^u$, and consider the following Randers metric,
\begin{align}\label{eq:CCNV_Finsler}
F = \alpha + \beta = \sqrt{-2y^u \left(y^v + H(u,x)\, y^u + \,W_a(u,x)\,y^a\right) +h_{ab}(u,x) y^a y^b}+ y^u. 
\end{align}

By Proposition \ref{prop:Randers_well_defined_on_future_timecone},  any such $F$ defines a Finsler spacetime on the forward timecone of $\alpha$ if and only if $\D u$ is future-oriented, which we can always achieve (at least locally) by choosing an appropriate time-orientation. We can use these observations to strengthen our theorem.

\begin{theor}
Let $(M,F)$ be a Randers spacetime, $F = \alpha + \beta$. Then $F$ is an exact Berwald solution to the FEFE if and only if one of the following statements is true:
\begin{itemize}
\item There exist local coordinates such that
\begin{align}
F = \alpha + \beta = \sqrt{-2y^u \left(y^v + H(u,x)\, y^u + \,W_a(u,x)\,y^a\right) +h_{ab}(u,x) y^a y^b}+ y^u,
\end{align}
where $\alpha$ is a vacuum solution to Einstein gravity.
\item There exist local coordinates such that
\begin{align}
F = \alpha + \beta = \sqrt{-(y^0)^2+ (y^1)^2+ (y^2)^2+ (y^3)^2}+ b_\mu y^\mu, \qquad b_\mu = \text{const.}
\end{align}
\end{itemize}
\end{theor}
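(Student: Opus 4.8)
The plan is to combine Theorem \ref{theor:main2} with the structural classification of Ricci-flat Lorentzian manifolds admitting a covariantly constant $1$-form. By Theorem \ref{theor:main2}, $F = \alpha+\beta$ is an exact Berwald solution to the vacuum FEFE if and only if $\alpha$ is a vacuum solution to the EFEs and $\beta$ is parallel (covariantly constant) with respect to the Levi-Civita connection of $\alpha$. So the entire problem reduces to: classify pairs $(\alpha,\beta)$ with $\alpha$ Ricci-flat Lorentzian and $\beta$ a nonzero parallel $1$-form. I would first invoke the cited result \cite{Batista_2014} (together with its higher-dimensional analogues \cite{Leistner2016,blanco2011,Mcnutt2009}): such a parallel $1$-form $\beta$ is either (i) everywhere timelike/spacelike with constant components in suitable coordinates, forcing $\alpha$ to be flat Minkowski by the holonomy/de Rham-type splitting argument, or (ii) null, in which case $\alpha$ takes the Brinkmann form \eqref{eq:CCNV} with $\beta = \D u$. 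These two cases are exactly the two bullet points in the statement.

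The key steps, in order, would be: \textbf{(1)} Apply Theorem \ref{theor:main2}, reducing the FEFE problem to ``$\alpha$ vacuum Einstein and $\beta$ parallel.'' \textbf{(2)} Observe that a parallel $1$-form has constant norm $a^{\mu\nu}b_\mu b_\nu$, so it is everywhere timelike, everywhere null, or everywhere spacelike. \textbf{(3)} In the non-null case, the parallel $1$-form $b^\sharp$ spans a parallel non-degenerate line subbundle; its orthogonal complement is then also parallel, so the holonomy is reducible, and — using Ricci-flatness — a local de Rham-type decomposition shows $\alpha$ splits off a flat factor; iterating (or directly citing \cite{Batista_2014} in dimension four) yields that $\alpha$ is flat, hence in Cartesian coordinates $\alpha = \sqrt{-(y^0)^2+(y^1)^2+(y^2)^2+(y^3)^2}$ and $b_\mu$ can be taken constant. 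This is the second bullet. \textbf{(4)} In the null case, cite the Brinkmann/CCNV normal form \eqref{eq:CCNV}, so $\alpha$ is given by $ds^2$ in \eqref{eq:CCNV} with $\beta = \D u$, which after substituting into $F=\alpha+\beta$ gives exactly \eqref{eq:CCNV_Finsler}; the condition that $\alpha$ be a vacuum solution is kept as the stated hypothesis. This is the first bullet. \textbf{(5)} For the converse direction, check that in each of the two listed forms $\beta$ is manifestly parallel (constant components in case two; $\D u$ with $\p_v$ the covariantly constant null vector in case one) and apply Theorem \ref{theor:main2} together with Proposition \ref{prop:Randers_well_defined_on_future_timecone} to ensure $F$ genuinely defines a Finsler spacetime on the forward timecone, after choosing the time-orientation so that $\D u$ (resp.\ $b$) is future-pointing.

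I expect the main obstacle to be step (3): justifying rigorously that a Ricci-flat Lorentzian manifold with a parallel non-null $1$-form must be flat. The clean way is to lean entirely on the four-dimensional result of \cite{Batista_2014}, which already states precisely this dichotomy; for the higher-dimensional formulation one invokes \cite{Leistner2016,blanco2011,Mcnutt2009}. If instead one wanted a self-contained argument, the subtlety is that the de Rham decomposition theorem in Lorentzian (indefinite) signature is more delicate than in the Riemannian case — it holds for non-degenerate parallel distributions but can fail when degenerate distributions are present (precisely the null case, which is why that case is genuinely different). Since the non-null parallel line is non-degenerate, the decomposition does go through there, and Ricci-flatness of each factor plus the fact that a Ricci-flat manifold of dimension $\le 3$ is flat closes the argument. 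The remaining steps are essentially bookkeeping: matching coordinates, substituting into $F = \alpha+\beta$, and the straightforward verification of the converse via the already-established propositions.
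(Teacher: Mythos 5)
Your proposal is correct and follows essentially the same route as the paper: the theorem is obtained by combining Theorem \ref{theor:main2} with the classification of \cite{Batista_2014} of Ricci-flat Lorentzian $4$-manifolds admitting a covariantly constant $1$-form (flat Minkowski with constant $b_\mu$, or null $\beta=\D u$ with $\alpha$ in Brinkmann/CCNV form), together with Proposition \ref{prop:Randers_well_defined_on_future_timecone} for the well-definedness of $F$ on the forward timecone. Your sketch of a self-contained de Rham splitting argument for the non-null case is a reasonable supplement, but the paper, like you, ultimately leans on the cited four-dimensional result.
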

In particular, the second item in the theorem may be viewed as a Randers analogue of the Bogoslovsky/Very Special Relativity (VSR) line element \cite{Cohen:2006ky,Gibbons:2007iu,Bogoslovsky1973,Bogoslovsky}. The first item, may be viewed as a Randers analogue of \textit{Very General Relativity (VGR)} spacetimes \cite{Fuster:2018djw} and when $W_a=0$ and $h_{ab}=\delta_{ab}$, it reduces to a Randers analogue of the Finsler pp-waves introduced in \cite{Fuster:2015tua}. In practice (to the best of our knowledge) the most general solution to Einstein's vacuum field equations that can appear here, i.e.\,of the form \eqref{eq:CCNV}, is the gyratonic pp-wave \cite{gyraton, Maluf2018}, which belongs to the VSI class of spacetimes \cite{Pravda:2002us}. In this case the transverse metric is Euclidean,  $h_{ab} = \delta_{ab}$, and the vacuum Einstein equations read
\begin{align}
    \nabla^2 H &= -\frac{1}{2}\left(\partial_{x^2}W_1 - \partial_{x^1}W_2\right)^2 + \, \partial_u\left(\partial_{x^1}W_1 + \partial_{x^2}W_2\right)\\
   0
  &= \partial_{x^2}\left(\partial_{x^2}W_1 - \partial_{x^1}W_2\right)  \\
    0&=\partial_{x^1} \left(\partial_{x^2}W_1 - \partial_{x^1}W_2
    \right),
\end{align}
where $\nabla^2 = \partial_{x^1}^2 + \partial_{x^2}^2$. The latter two equations restrict the form of metric functions $W_1, W_2$, and the first one determines $H$ for given $W_1, W_2$. 

\section{Discussion}

In this work we study Randers spacetimes of Berwald type and analyze Pfeifer and Wohlfarth's vacuum field equation of Finsler gravity for this class. We show that in this case the field equation is equivalent to the vanishing of the Finsler Ricci tensor, analogously to Einstein gravity. This implies that the considered vacuum field equation and Rutz's equation coincide in this scenario. Then we construct all exact solutions of Berwald-Randers type to vacuum Finsler gravity, which turn out to be composed of a Ricci-flat, CCNV (covariantly constant null vector) Lorentzian spacetime, or Brinkmann space, and a 1-form defined by its covariantly constant null vector. Since the pp-waves are the most well-known metric in this class we refer to the found solutions as \textit{Randers pp-waves}. \\

Interestingly, such a Randers spacetime has the same affine connection, and hence the same geodesics, as the \mbox{corresponding} CCNV Lorentzian spacetime (see Proposition \ref{prop:coinciding_spray}). 
Not only that, this also holds for the CCNV subclass of Finsler VSI spacetimes presented in \cite{Fuster:2018djw}, which include the Finsler pp-waves \cite{Fuster:2015tua}. The latter case has also been discussed recently in \cite{Elbistan:2020mca}. 
In Finsler geometry it is not uncommon to encounter different Finsler metrics that nevertheless share the same geodesics. 
This remarkable feature of Finsler geometry was first pointed out by Tavakol and Van den Bergh \cite{Tavakol_1986}. However, having the same affine structure does not imply that the causal structure coincides as well. For example, the set of causal (timelike and null) directions of the Randers pp-waves does not coincide with that of the classical pp-waves, due to their different Finsler functions. \\

The natural question that arises is whether and how such spacetimes can be physically distinguished from one another. In particular, is there any kind of physical observable that could distinguish Randers pp-waves from Finsler pp-waves and/or Lorentzian pp-waves? Such questions cannot be answered without a consistent observer framework, see for example \cite{Gurlebeck:2018nme,Raetzel:2010je}. These studies suggest that time dilations between observers moving relatively to each other would yield different results for Randers and Lorentzian pp-wave spacetimes, due to the different normalization of timelike geodesics. Similarly, speed light measurements may not coincide for different observers  due to the modified null condition for light rays \cite{Bernal_2020}, although this strongly depends on the observer model employed \cite{Gurlebeck:2018nme}. These important questions will be addressed in future work. \\

A somewhat related issue pertains to geodesic completeness. It is known that any Lorentzian compact pp-wave is geodesically complete, and this reduces to plane waves in the Ricci flat case \cite{Leistner2016}. Whether these features also hold for Finslerian versions of pp-waves such as the ones in this work and in \cite{Fuster:2015tua} remains to be studied. However, the fact that all of these spacetimes share the same geodesics could point in that direction. \\

Both our Randers and the previously known $m$-Kropina solutions in \cite{Fuster:2018djw} belong to the class of $(\alpha,\beta)$-metrics. A natural next step would be to study the existence of generic solutions of $(\alpha,\beta)$-type, i.e.\,with a defining Finsler function $F = \alpha\, \phi(\beta/\alpha)$ with arbitrary scalar $\phi$. Even beyond this class, it would also be of interest to explore solutions in the form of Finsler $b$-spaces, which describe Lorentz violating particle kinematics derived from the standard model extension \cite{Kostelecky:2011qz} and certain classical mechanical systems \cite{FOSTER2015164}. Such metrics are characterized by a Finsler function $F = \alpha + \tilde\beta$, where $\tilde\beta = \sqrt{\beta^2 - |b|^2\alpha^2}$ and $b$ is an arbitrary 1-form. Note that Randers and $b$-spaces coincide for a null 1-form $b$.\\


Finally, it would be interesting to investigate Berwald-Randers spacetimes in higher dimensions. In this case a Brinkmann metric also induces a Berwald-Randers spacetime, in a straightforward generalization of expression \eqref{eq:CCNV_Finsler}. The vacuum field equations in higher-dimensional Finsler gravity, although not formally studied yet, are very likely to resemble the 4D case (up to numerical factors related to the considered number of dimensions). In such a scenario, spacetimes generalizing \eqref{eq:CCNV_Finsler} with a Ricci-flat Brinkmann metric would immediately be solutions of higher-dimensional Finsler gravity.


\begin{acknowledgments}
C.\,Pfeifer was supported and funded by the Estonian Ministry for Education and Science through the Personal Research Funding Grant PSG489, as well as the European Regional Development Fund through the Center of Excellence TK133 ``The Dark Side of the Universe'' and the Deutsche Forschungsgemeinschaft (DFG, German Research Foundation) - Project Number 420243324. A.\,Fuster thanks S.\,Hervik for his feedback on CCNV spacetimes and A.\,Ach\'{u}carro for everything. A.\,Fuster especially thanks D.\,Bao, whose question about the existence of Randers gravitational waves inspired this article. The work of A.\,Fuster is part of the research program of the Foundation for Fundamental Research on Matter (FOM), which is financially supported by the Netherlands \mbox{Organisation} for Scientific Research (NWO). The authors would like to acknowledge networking support by the COST Action CA18108, supported by COST (European Cooperation in Science and Technology).
\end{acknowledgments}
\appendix

\section{Proof of Proposition \ref{prop:Randers_well_defined_on_future_timecone}}\label{app:proof}

Here we include the proof that the Randers metric $F=\alpha+\beta$, where $\alpha$ is Lorentzian and $\beta$ is future-pointing and not spacelike, defines a Finsler spacetime on the forward timecone of $\alpha$.\\

The lemma below shows that $F = \alpha + \beta$ is strictly positive on the forward timecone of $\alpha$, so that $\det g$ is strictly positive as well, by Eq.\,\eqref{eq:determinant_Randers}. The remainder of the proof is the same as in the positive definite scenario. Introduce a parameter $\lambda$ and consider the family of functions $F_\lambda = \alpha+\lambda\beta$ for $\lambda\in [0,1]$. For each value of $\lambda$, $\det g_\lambda$ is strictly positive and so $g_\lambda$ has constant signature for $\lambda\in [0,1]$. In particular $g_{\lambda=1}$ has the same signature as $g_{\lambda=0}$, which is just the statement that $g$ is Lorentzian. Since the homogeneity of $F$ is clear, this completes the proof that $F$ defines a Finsler spacetime on the forward timecone of $\alpha$.

\begin{lem}\label{lem:Randers_well_defined_on_future_timecone}
Let $y$ be timelike, $b$ timelike or null, and $y$ and $b$ both future-oriented, all with respect to some time-oriented Lorentzian matrix $g_{\mu\nu}$ with index convention $(+,-,\dots,-)$. Then $g_{\mu\nu}y^\mu b^\nu>0$.
\end{lem}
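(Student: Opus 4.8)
The plan is to reduce everything to the two-dimensional timelike plane spanned by $y$ and $b$ and exploit the fact that a Lorentzian inner product restricted to such a plane is again a Lorentzian (signature $(+,-)$) form. First I would treat the degenerate case $b = 0$ trivially (it is excluded since $b$ is future-oriented, but worth noting), and then the case where $y$ and $b$ are parallel, say $b = \mu y$ with $\mu > 0$ by future-orientedness: then $g_{\mu\nu}y^\mu b^\nu = \mu\, g_{\mu\nu}y^\mu y^\nu > 0$ immediately, since $y$ is timelike. So the substantive case is $y,b$ linearly independent.

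For the main case, consider the plane $\Pi = \mathrm{span}\{y,b\}$. Since $y$ is timelike, $g|_\Pi$ is non-degenerate of signature $(+,-)$: indeed a two-plane containing a timelike vector cannot be degenerate or negative definite in a Lorentzian space. Pick an orthonormal basis $\{e_+, e_-\}$ of $(\Pi, g|_\Pi)$ with $g(e_+,e_+) = 1$, $g(e_-,e_-) = -1$, chosen so that $e_+$ is future-oriented. Write $y = y^+ e_+ + y^- e_-$ and $b = b^+ e_+ + b^- e_-$. Future-orientedness of $y$ (timelike) forces $y^+ > 0$ and $|y^+| > |y^-|$; future-orientedness of $b$ (timelike or null) forces $b^+ > 0$ and $|b^+| \geq |b^-|$. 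Then
\begin{align}
g_{\mu\nu}y^\mu b^\nu = y^+ b^+ - y^- b^- \geq y^+ b^+ - |y^-||b^-| > |y^-||b^+| - |y^-||b^-| = |y^-|\left(|b^+| - |b^-|\right) \geq 0,
\end{align}
and in fact the strict inequality $y^+ b^+ > |y^-||b^-|$ already follows from $y^+ > |y^-|$ and $b^+ \geq |b^-|$ together with $b^+ > 0$, so the product is strictly positive. (One must be slightly careful chaining the inequalities when $y^- = 0$ or $b^- = 0$, but those sub-cases only make the bound easier.)

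The only genuine subtlety — the step I expect to be the main obstacle — is establishing cleanly that the "future cone" of $g$, defined via the global time-orientation $T$, restricts correctly to $\Pi$: i.e.\ that $e_+$ being future-oriented as an element of $T_xM$ is consistent with the component inequalities $y^+ > |y^-| > 0$ etc.\ for the given future-oriented $y$ and $b$. This is really just the statement that on a Lorentzian vector space the set of future-timelike vectors is a convex cone and that $g(u,w) > 0$ for any two future-causal vectors not both null-and-parallel; one can either cite this standard fact or prove it in-line by the orthonormal-basis computation above applied in the full tangent space and then noting the computation only ever involved the span of $y$ and $b$. I would phrase the final writeup using the second route to keep the lemma self-contained: choose the basis of $\Pi$, fix signs using $T$, read off the inequalities, and conclude by the displayed chain.
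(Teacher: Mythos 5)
The inequality chain inside the plane $\Pi=\mathrm{span}\{y,b\}$ is fine, but the step you yourself flag as a ``subtlety'' is in fact the whole content of the lemma, and as written your argument is circular there. You fix the signs by asserting that future-orientedness of $y$ and $b$ forces $y^+>0$ and $b^+>0$ once $e_+$ is chosen future-oriented. Future-orientation, however, is defined by $g(\,\cdot\,,T)>0$ with the time-orientation vector $T$ generally \emph{not} lying in $\Pi$; to pass from $g(y,T)>0$ and $g(e_+,T)>0$ to $g(y,e_+)>0$ (i.e.\ $y^+>0$) you need precisely the statement ``two future-oriented causal vectors have positive inner product'' --- the lemma itself, applied to the pairs $(y,e_+)$ and $(b,e_+)$. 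So the plane-restricted computation cannot stand on its own; it presupposes the timelike--timelike case of the result.

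Your proposed repair --- run the orthonormal-basis computation in the full tangent space and note it only involves the span of $y$ and $b$ --- is exactly the paper's proof, and once you adopt it the reduction to $\Pi$ becomes superfluous. The paper chooses a basis in which $g_{\mu\nu}=\eta_{\mu\nu}$ and the time-orientation vector has components $(v^0,0,\dots,0)$; then future-pointing reads $y^0>0$, $b^0>0$, the causality assumptions read $y^0>|\vec y|$ and $b^0\geq|\vec b|$, and the Euclidean Cauchy--Schwarz inequality on the spatial parts gives $\eta_{\mu\nu}y^\mu b^\nu \geq y^0b^0-|\vec y||\vec b|>0$. The only thing your two-plane detour buys is replacing Cauchy--Schwarz by a one-dimensional triviality, at the price of the circular sign-fixing; the clean self-contained write-up is the full-space computation, i.e.\ the argument already in the appendix.
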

\begin{proof}
Let the time orientation be defined in terms of a nowhere vanishing timelike vector field $v$. Then future-orientation of $y$ and $b$ means that we have $g_{\mu\nu}y^\mu v^\nu>0$ and $g_{\mu\nu}b^\mu v^\nu\geq 0$. Now notice that we can always pick a basis of $T_xM$ such that $v$ has coordinates $(v^0,0,\dots,0)$ and $g_{\mu\nu}=\eta_{\mu\nu}$ is the Minkowski metric (first pick a basis such that the metric is Minkowski, then apply an appropriate Lorentz boost to make all components of $v$ other then the first vanish, and then possibly apply a reflection in the $x^0$ coordinate). In this basis we write $\vec y = (y^1, y^2, ...y^n)$. With this notation the future-pointing properties of $y$ and $b$ translate to $y^0>0$ and $b^0>0$ and the remaining properties can be stated as $|y^0|>|\vec y|$ and $ |b^0|\geq |\vec b|$. Combined they read $y^0>|\vec y|$ and $ b^0\geq |\vec b|$. Then we have
\begin{align}
\eta_{\mu\nu}y^\mu b^\nu &= y^0b^0 - \vec y\cdot\vec b \geq y^0b^0 - |\vec y||\vec b| > y^0b^0 - y^0 b^0 = 0.
\end{align}
\end{proof}


\bibliography{Randers_pp_waves_Rev1}

\end{document}